\def\EQ#1{\begin{eqnarray}#1\end{eqnarray}}
\def\Tr{{\mathrm{Tr}}}
\newcommand{\djj}{d\kern-0.4em\char"16\kern-0.1em}
\newcommand{\beq}{\begin{equation}}
\newcommand{\eeq}{\end{equation}}
\newcommand{\bea}{\begin{eqnarray}}
\newcommand{\eea} {\end{eqnarray}}
\newtheorem{lemma}{Lemma}
\newtheorem{prop}{Proposition}\def\PRO{\begin{prop}}\def\ORP{\end{prop}}
\newtheorem{coro}{Corollary}\def\COR{\begin{coro}}\def\ROC{\end{coro}}
\newtheorem{theo}{Theorem}\def\TH{\begin{theo}}\def\HT{\end{theo}}
\def\TH{\begin{theo}}\def\HT{\end{theo}}
\newtheorem{defi}[prop]{Definition}\def\DE{\begin{defi}}\def\ED{\end{defi}}
\newtheorem{lemme}[prop]{Lemma}\def\LE{\begin{lemme}}\def\EL{\end{lemme}}
\def\co{\mathcal{C}}
\def\a{\alpha}
\begin{document}
\title{Contrary Inferences in Consistent Histories and a Set Selection Criterion}

\author{Petros Wallden}
\email{petros.wallden@hw.ac.uk}
\affiliation{SUPA, School of Engineering and Physical Sciences, Heriot-Watt University, Edinburgh EH14 1AS, UK}
\affiliation{Physics Department, University of Athens, Panepistimiopolis 157-71, Ilisia Athens, Greece}

\begin{abstract}
The best developed formulation of closed system quantum theory that handles multiple-time statements, is the consistent (or decoherent) histories approach. The most important weaknesses of the approach is that it gives rise to many different consistent sets, and it has been argued that a complete interpretation should be accompanied with a natural mechanism leading to a (possibly) unique preferred consistent set. The existence of multiple consistent sets becomes more problematic because it allows the existence of contrary inferences \cite{Kent 1997}.
 We analyse the conceptual difficulties that arise from  the existence of multiple consistent sets and provide a
suggestion for a natural set selection criterion. This criterion does not lead
to a unique physical consistent set, however it evades the existence of consistent
sets with contrary inferences.
 The criterion is based on the concept of preclusion and the requirement that probability one propositions and their inferences should be non-contextual. The allowed consistent sets turn-out to be compatible with coevents which are the ontology of an alternative, histories based, formulation \cite{coevent1,coevent2,coevents review}.
\end{abstract}

\maketitle
\section{Introduction}
The construction of a theory of quantum gravity can be considered as the greatest challenge of modern physics. Among a great number of researchers it is believed that  a better understanding or a new formulation of quantum theory might be a necessary step in the search for quantum gravity. In the same time, concepts and ideas from quantum gravity can provide guiding principles for research in quantum foundations.

One guiding principle that quantum gravity may suggest, is the use of spacetime formulations of quantum theory (Dirac \cite{Dirac}, Feynman \cite{Feynman}). The different nature that space and time have in quantum theory is in conflict with the spirit of general relativity and leads to the problem of time in canonical approaches to quantum gravity. Therefore, \emph{histories} or \emph{path integral} formulations could be better suited for quantum gravity. Moreover, histories formulations can be casted in an observer independent way. The universe is the archetypical example of a closed system and an observer independent formulation is necessary for Quantum Cosmology.

However, obtaining a complete interpretation of quantum theory using histories as starting point is not easy. The best developed such approach is the consistent histories \cite{CH1,CH2,CH3,CH4} approach. Consistent histories have received several critical opinions \cite{ACH1,Kent 1997,ACH2,ACH3} and it is understood that the formalism needs further additions to appear as a complete viable alternative. Other attempts for a histories formulation include linearly positive histories \cite{GoPa,Hartle 2004}, the quantum measure theory \cite{quantum measure 1,quantum measure 2} and the development that followed those papers leading to the coevents formulation \cite{coevent1,coevent2,coevents review}. In this paper we will reexamine the most striking conceptual problem that stems from the consistent histories formalism, namely the existence of multiple (possibly incompatible) consistent sets, and suggest a possible solution.

We will first examine the concept of contextuality and probabilities from the perspective of a realist interpretation of standard quantum theory  in section \ref{Section contextuality}. Then we will briefly review the consistent histories approach in section \ref{Section consistent histories}. In section \ref{Section contrary inferences} we will state how contrary inferences appear in consistent histories and give an alternative derivation using the concept of zero covers which stresses some of the conceptual difficulties that arise. We will give two examples of zero covers in section \ref{Section examples}. In section \ref{Section PCS} we will give the proposed set selection criterion using the concept of preclusive consistent set. In section \ref{Section coevents PCS} we will introduce the concept of coevents and confirm that the criterion suggested is compatible with this ontology. The examples will be reexamined in section \ref{Section examples 2} and we conclude in section \ref{Section discussion}.

\section{Contexutality and probabilities}\label{Section contextuality}

Standard quantum theory has been proven to be contextual by the theorems of Gleason \cite{Gleason} and Kochen and Specker \cite{KS}.  However, this does not rule out ontological (hidden variables) models, provided they have this property. The de Broglie-Bohm theory \cite{de Broglie-Bohm} provides an example of Contextual Hidden Variables Theory (CHVT). CHVT are generally classified in two types depending on how the different contexts are determined \cite{Shimony 1984}. First are the \emph{environmental} CHVT, where the context is determined for example, from the specific apparatus or environment that is realised. For example de Broglie-Bohm theory is considered by some as environmental CHVT. Second are the \emph{algebraic} CHVT, where the possible contexts are maximal Boolean sub-algebras of projection operators \cite{Algebraic CHVT}.

In a contextual model, to use logical inferences one should be restricted at one particular context. If there are two propositions for which there does not exist any context containing both of them, then these two propositions cannot be compared or appear in the same logical inference. So far, as it is typically done, we have treated the properties corresponding to propositions and the possible contexts, independently of the dynamics of the theory. This position has been challenged recently by Sorkin \cite{Sorkin 2010} where he views logic as a dynamical entity. While it was not stated, a dynamic picture of logic was already adopted in consistent histories where the allowed (classical) questions, depend on the initial conditions and dynamics of the system. Before proceeding further, we should make some remarks about probabilities and their interpretation.

Interpreting the concept of probabilities has been an issue of debate since the early discussions of the founders of probability theory. It is outside the scope of this paper to delve in depth in this question. Here we will take the view that probabilities in general are epistemic in so far they concern with the knowledge (that an agent has) about the (ontic states of the) system. However, probability one or probability zero statements have a very different nature. As Einstein, Podolsky and Rosen stated in \cite{EPR} ``\emph{If, without in any way disturbing a system, we can predict with certainty (i.e., with probability equal to unity) the value of a physical quantity, then there exists an element of physical reality corresponding to this physical quantity}''. In other words, those type of propositions have ontic nature since they represent ``an element of physical reality'' of the system.

We now return to the logical structure and inferences one can make. The most important law of inference in order to reason deductively about a (possibly quantum) system is the Modus Ponens. It states that if $A\rightarrow B$ and $A$ is true, then $B$ is also true. Strictly speaking this and any other inference law should be applied to propositions that exist within a single (specified) context. However, if the statements in question are ontic in nature, as is the case with probability one propositions, then the truth values and the inferences derived by them should not depend on the context. This is reflected by the fact that in all CHVT the following three properties hold for the truth values that probability one and zero statements attain. Let a proposition $S$ be such that its probability is unity $p(S)=1$ which also implies that $p(\neg S)=0$\footnote{Here we identify logical propositions with a collection of subsets of a sample space (Kolmogorov). In this notation set inclusion $\subseteq$ means that there is a logical implication.}.

\begin{enumerate}

\item All contexts that include the proposition $S$, give truth value $v(S)=\textrm{True}$.

\item All contexts that include the negation of $S$, i.e. $\neg S$, give the truth value $v(\neg S)=\textrm{False}$.

\item If $S\subseteq S'$ then all contexts that include $S'$, give truth value $v(S')=\textrm{True}$.

\end{enumerate}
The importance of these conditions are in order to be able to use logical inferences. The first two imply that one can use ``Modus Tollens'' which is essential for proofs by contradiction, while the first and third imply that one can use ``Modus Ponens'' which is essential for deductive proofs. Those three properties, if we accept that probability one and zero statements are ontic, appear necessary to maintain a (contextual) realist view. The truth values of certain properties may vary depending on the context but this should not include properties that occur with certainty and are therefore elements of the reality. To sum up: (i) Physical predictions are separated to ``ontological predictions'' (probability one and zero) and ``epistemic predictions''. (ii) Statements and inferences involving ontological predictions should be context-independent, as is the case in standard quantum theory. As we will see in consistent histories if one does not supplement the formalism with some extra criteria, the third property is not satisfied.

Before proceeding further, we should give some definitions here. Assume we have propositions with corresponding projection operators $P$ and $Q$, corresponding to projections at one moment of time\footnote{A more general view of what a proposition is for a histories formulation will be given later}. If $[P,Q]\neq0$ they are called \emph{complementary}. If they are orthogonal and add to the identity, i.e. $P=1-Q$, and $PQ=QP=0$ they are called \emph{contradictory}. Finally, two propositions are called \emph{contrary} if they are orthogonal and not contradictory so that $P<1-Q$ and $PQ=QP=0$. A contrary inference is defined to be, when two contrary propositions are both implied with probability one. This is not possible in classical logic. In a contextual logic it is possible in principle to have contrary inferences, provided there does not exist any context containing both propositions. However, in standard quantum theory (and quantum logic), contrary inferences cannot exist because any two orthogonal propositions can be simultaneously tested by a measurement of a single observable (i.e. there is always a context that contains two orthogonal propositions).

\section{Consistent Histories Formalism}\label{Section consistent histories}

Consistent histories (or else decoherent histories) is an approach introduced by Griffiths, Omnes and Gell-Mann and Hartle (\cite{CH1,CH2,CH3,CH4}). The central aim of consistent histories is to assign probabilities to (coarse-grained) histories of a closed quantum system. Here we will give a brief review where the reader is  referred to the original papers for further details.

In histories formulations the central mathematical structure of interest, is the histories space $\Omega$, the space of all finest grained descriptions\footnote{Note, that following Hartle and Sorkin, we are adopting the path integral view of histories that takes the stance that there exist a unique (preferred) fine grained description, i.e. paths in the generalised configuration space. Other points of view, such as Isham's, are compatible with the one we take, at least in most ordinary cases.}. It is the set of all possible histories, and each element of it $h_i\in\Omega$ corresponds to a full description of the system, specifying every detail and property. For example, a fine grained
history gives the exact position of the system along with the specification of any internal degree of freedom, for every moment of time.

All physical questions correspond to subsets $A$ of the history space $\Omega$, where $A$ consists of a collection of fine grained histories. For example the question was the particle at position $x$ at time $t$ is the subset of $\Omega$ consisting of all trajectories that at time $t$ where at position $x$. We will refer to those questions as coarse grained histories. When we use the term ``history'' without further specification we refer to either a coarse grained or a fine grained history.

To each fine grained history, using the Feynman path integral we can assign a complex number (amplitude). This amplitude, depends on the initial state and on the dynamics of the system encoded in the action $S$:

\beq\label{Feynman} \a(h_i)=\exp i S(h_i)\eeq
Using this amplitude one can recover the transition amplitudes from $(x_1,t_1)$ to $(x_2,t_2)$ by summing through all the paths $\mathcal P$ obeying the initial and final condition:

\beq \a(x_1,t_1;x_2,t_2)=\int_\mathcal P \exp(iS[x(t)])\mathcal D x(t)\eeq
The mod square of this amplitude is the transition probability. To measure the interference between coarse grained histories we define the decoherence functional using the Feynman amplitudes Eq. (\ref{Feynman}):

\bea\label{path df} D(A,B)&=&\int_{A} \exp{(-iS[x(t)])}\mathcal{D}x[(t)]\int_{B} \exp{(+iS[y(t)])}\mathcal{D}y[(t)]\times\nonumber\\& &\delta(x(t_f)-y(t_f))\rho(x(t_0),y(t_0))\eea
Where $A$ and $B$ are any coarse grained histories, $t_f$ is the final while $t_0$ the initial moment of time considered and $\rho$ the initial state\footnote{The initial state can be viewed as the initial condition, which in principle is the initial state of the universe. The decoherence functional and the quantum measure that we will later define, depend crucially on this initial state. The decoherence functional encodes both the dynamics and the initial condition of the system}. The decoherence functional obeys the following conditions:

\begin{enumerate}
\item Hermiticity: $D(A,B)=D^*(B,A)$
\item Bi-linearity: $D(A\sqcup B, C)=D(A,C)+D(B,C)$, where $A\sqcup B$ is the disjoint union of $A$ and $B$.
\item Strong positivity\footnote{In the original references, a weaker condition was given, namely that $D(A,A)\geq 0$ which is called (weak) positivity, while the strong positivity condition was first defined in \cite{Sorkin QRW}. See Appendix \ref{Appendix strong positivity} for details.}: the matrix $D(A_i,A_j)$ is positive for any collection $\{A_1,A_2,\cdots,A_n\}$ of subsets of $\Omega$.
\item Normalisation\footnote{The sum in the expression should be replaced with an integral if we consider continuous histories.}: $\sum_{i,j}D(A_i,A_j)=1$
\end{enumerate}

The decoherence functional can also be defined using time ordered strings of projection operators. In particular

\beq\label{operator df} D(A,B)=\Tr(C^\dagger_A\rho C_B),\eeq
where $C_A$ and $C_B$ are the class operators, which are strings of time ordered projection operators corresponding to the (coarse grained) histories $A$ and $B$ respectively that we will specify below and is defined in the following way.

\beq C_A= P_{A_n}U(t_n-t_{n-1})\cdots U(t_3-t_2)P_{A_2}U(t_2-t_1)P_{A_1}\eeq
Here $U(t)$ is the unitary evolution operator that relates to the Hamiltonian via $U(t)=\exp (-iHt)$, and $P_{A_i}$ is the projection operator projecting at the subspace in which history A lay at time $t_i$. The history $A$ is the subset of $\Omega$ that contains all the histories that the system lies in the subspace that $P_{A_1}$ projects to, at time $t_1$ \emph{and} in the subspace that $P_{A_2}$ projects to, at time $t_2$, etc. The projection operators used for the definition of the class operator, in general, can project at any subspace of the Hilbert space (e.g. for point particle, the projections could project to either position or momentum space). Note that the expression for the class operator, is precisely the one used in ordinary quantum mechanics to obtain the amplitude, if some external observer carried out those measurements at the given times. By the linearity property of the decoherence functional Eq. (\ref{operator df}) can be extended to subsets of $\Omega$ that are not just strings of projection operators and are called inhomogeneous histories. We are not going into deeper discussion of the differences of those two definitions of the decoherence functional and their interpretational consequences. A final observation to make, is that from Eq. (\ref{operator df}) one can see that histories differing at the final moment of time automatically decohere because of the cyclic property of the trace. This also clarifies the source of the final time delta function in Eq. (\ref{path df}).

From the positivity condition, we can see that the diagonal elements of the decoherence functional are non-negative. Those terms are also referred to as quantum measure (\cite{quantum measure 1,quantum measure 2}) and are labelled as $\mu(A):=D(A,A)$.
The quantum measure cannot be interpreted as probability because due to interference, the additivity condition of probabilities fails:

\beq \mu(A\sqcup B)\neq\mu(A)+\mu(B)\eeq
The question of assigning probabilities to histories can now be rephrased in the following way. When is it possible to assign the quantum measure of a coarse grained history $A$ as the probability of this history $A$ actually occurring. The general aim of the approach is to be able to reason about a closed system with no reference to observer or an a-priori distinction of microscopic and macroscopic degrees of freedom, or a distinction between quantum and classical systems.

In order to make the quantum measure into a proper classical measure, one needs to restrict attention to some particular collection of subsets of $\Omega$ (coarse grained histories) rather than the full collection of all possible subsets of $\Omega$. The failure to satisfy the additivity condition can be traced at the off-diagonal terms of the decoherence functional as one can see from the very definition\footnote{Strictly speaking, from the real part of the off-diagonal terms.}. Let us take a partition of $\Omega$ which is defined to be a collection of coarse grained histories $\mathcal P_1=\{A^1_1,A^1_2,\cdots,A^1_n\}$ where $A^1_i\cap A^1_j=\emptyset$ and $\cup_i A^1_i=\Omega$. We call each coarse grained history $A^j_i$ of a partition $P_j$ as a cell of this partition. The superscript labels the partition considered, while the subscript labels the different cells of one partition. If for any pair of cells of one partition $A^1_i, A^1_j$ it holds that

\beq\label{decoherence condition} D(A^1_i,A^1_j)=0\textrm{ if }i\neq j\eeq
then the partition is called a \emph{Consistent Set} (CS). For this partition and any further coarse-graining, the standard rules of probability theory hold. The quantum measure, when restricted to those questions, becomes a classical measure. One would be tempted to assign these probabilities to the coarse-grained histories of the partition.

However, one can consider other partitions, say $\mathcal P_2=\{A^2_1,A^2_2,\cdots,A^2_n\}$. It is possible that this partition also forms a consistent set obeying Eq. (\ref{decoherence condition}). More importantly there does not exist, one finest-grained consistent set, that all other consistent sets arise as further coarse-grainings. The existence of multiple CS give rise to the following two problems.

The first problem is which CS to choose in order to make predictions. The probability rules apply to each CS separately so it is not possible to compare (coarse grained) histories that belong to different CSs. Without any further specification, one is forced in a contextual view of consistent histories, where each CS represents a different context. It is important to note here again, that since the set of possible CS (i.e. of contexts) depends on the initial conditions and dynamics of the system, the structure of the logic is also dynamical. The logical structure of such contextual view has been analysed using Topos theory by Isham \cite{isham-topos-histories} and subsequently by Flori \cite{Flori}. However, to make sense of such contextual-logic we need a suitable interpretation. It needs to be compatible with the physical requirements that a logical structure representing a physical system should obey. For example, such a physical requirement could be the compatibility of different contexts when making statements or inferences about ontic properties of the system. The viability of an interpretation depends exactly on which are these physical requirements and here we do not claim to answer this question in general. We concentrate on the physical requirement of compatibility of different contexts.
This leads to the second problem. Are the \emph{different} CS compatible with each other in that the physical (ontological) predictions that they make do not lead to contradiction.

Here we should stress that the quantum measure of a coarse grained history is independent of which CS it belongs and therefore any probabilistic predictions made using consistent histories do not depend on context/CS. The predictions of standard quantum mechanics are recovered. The problems arise when we consider the inferences that can be made that involve histories belonging to different CSs. In other words one is not allowed to make inferences for propositions belonging to different CSs even if those concern ``ontological predictions''. This is a form of contextuality stronger than the one in standard quantum theory, where inferences involving probability one predictions are independent of context. For example in standard quantum theory, if for a given state $\rho$ a proposition $P_1$ has probability unity ($\Tr(P_1\rho)=1$), then any coarser proposition $P_2$, such that $P_2 P_1=P_1$ occurs with certainty ($\Tr(P_2\rho)=1$) independent of the context (other projections that $P_2$ is measured along).

Because of the above issues, it has been argued that a complete interpretation of consistent histories should be accompanied with some principle that selects a, possibly unique, consistent set. In particular, the principle should at the very least address the second problem. In other words, if it does not provide a unique preferred CS, it should at least restrict the possible CSs in such a way that the remaining CSs are compatible with each other. This would (at least) allow for a realist albeit contextual view. The issue of a set selection criterion has been discussed already in the literature and one can identify two types of criteria. First are the criteria that are based on some physical principle (e.g. thermodynamic considerations) or induced by the environment \cite{Anastopoulos 1998,GeHa 2007,RZZ 2013}. Second are criteria that are based on the internal consistency of the logical structure (with respect to the physical predictions) (e.g. \cite{Kent 2000}). The criterion we will introduce in this paper is of the second type. In the remaining of the paper we will restrict attention to finite dimensional histories space.

\section{Contrary Inferences in Consistent Histories and Zero Covers}\label{Section contrary inferences}

In consistent histories, it is possible to have two contrary propositions $P,Q$ for a given moment of time $t$ and two consistent sets $C_1$ and $C_2$ such that

\EQ{C_1&=&\{h_P=P \textrm{ at time }t,h_{\neg P}=\neg P\textrm{ at time }t\}\textrm{ where }\mu(h_P)=1,\mu(h_{\neg P})=0\nonumber\\
C_2&=&\{h_Q=Q\textrm{ at time }t,h_{\neg Q}=\neg Q\textrm{ at time}t\}\textrm{ where }\mu(h_Q)=1,\mu(h_{\neg Q})=0}
In other words, in consistent histories there exist contrary inferences. Note however, that there does not exist any CS (context) that includes both $h_P$ and $h_Q$. We will give an example of contrary inference in consistent histories later.

A very important observation is that $h_{\neg P}\cup h_{\neg Q}=\Omega$, i.e. that the union of the negations of histories corresponding to contrary propositions, is the full histories space $\Omega$, i.e. these two histories generate an (overlapping) cover of $\Omega$. Moreover, each of these two histories have quantum measure zero. This is a very specific example of what is called a \emph{zero cover}. It is defined to be a collection of histories that  their union is $\Omega$ and each of these histories of the collection has quantum measure zero. For classical probability measures, no zero covers can possibly exist. However, for a quantum measure it is possible. Most of the conceptual problems in interpreting histories formulations arise due to this property. For example, the Kochen Specker theorem constitutes  an example of a zero cover \cite{DoGh 2008,SuWa 2010}.

Here we should stress the importance of quantum measure zero histories for consistent histories. Let a history $Z$ have quantum measure zero $\mu(Z)=0$. It is a property of the decoherence functional and thus of the quantum measure, that the negation of a quantum measure zero history $\mu(Z)=0$ has quantum measure one $\mu(\neg Z)=1$ and moreover the partition of $\Omega$ consisting of these two histories $\{Z,\neg Z\}$ forms a CS. In other words, \emph{all} quantum measure zero sets are contained in at least one CS. It is interesting to note however, that this is \emph{not} the case for quantum measure one histories, since there exist many such histories that do not belong to any CS \footnote{An example of the latter can be found in \cite{Wallden 2008} where the quantum measure of a particle never leaving a region of the configuration space is unity because of the quantum Zeno effect, without belonging to any consistent set. Note however, that if a quantum measure one history \emph{does} belong to a consistent set, then its negation has quantum measure zero.}.

\begin{lemma} Every zero cover that consists of two (coarse-grained) histories leads to a contrary inference.
\end{lemma}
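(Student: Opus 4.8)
The plan is to run the construction noted just above the lemma in reverse. It was observed there that a contrary inference between $P$ and $Q$ produces a two-element zero cover $\{h_{\neg P},h_{\neg Q}\}$; here I would show that \emph{every} two-element zero cover arises in exactly this way. So let $\{Z_1,Z_2\}$ be a zero cover, i.e. $Z_1\cup Z_2=\Omega$ with $\mu(Z_1)=\mu(Z_2)=0$. First I would apply the quantum-measure property recalled immediately before the lemma: because $\mu(Z_i)=0$, the complement satisfies $\mu(\neg Z_i)=1$ and the partition $\{Z_i,\neg Z_i\}$ is itself a consistent set, call it $C_i$, for $i=1,2$. Put $P:=\neg Z_1$ and $Q:=\neg Z_2$; then $P$ is inferred with probability one within $C_1$, and $Q$ is inferred with probability one within $C_2$.

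Next I would verify that $P$ and $Q$ are contrary. Orthogonality is immediate from the cover condition: $P\cap Q=\neg Z_1\cap\neg Z_2=\neg(Z_1\cup Z_2)=\emptyset$, so $PQ=QP=0$ and in particular $P\subseteq\neg Q$. It remains to exclude the contradictory case $P=\neg Q$, and this is exactly where the zero-cover hypothesis does the work: if $P=\neg Q$ then $\neg Z_1=Z_2$, whence $\mu(Z_2)=\mu(\neg Z_1)=1$, contradicting $\mu(Z_2)=0$. Hence the inclusion is strict, $P<1-Q$, so $P$ and $Q$ are contrary; and since each is implied with probability one (in $C_1$, resp. $C_2$) we have a contrary inference, which is the claim. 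The same remark also shows $Z_1,Z_2\neq\Omega$ and hence $P,Q\neq\emptyset$, so the inference is non-trivial.

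I do not anticipate a real obstacle; the only point needing care is the bookkeeping distinction between ``contrary'' and ``contradictory'', i.e. noticing that a genuine two-element zero cover can never be a partition of $\Omega$ --- a cell of such a partition would simultaneously carry quantum measure zero and, via the negation-of-measure-zero property, quantum measure one --- so the two cells must overlap, $Z_1\cap Z_2\neq\emptyset$, which is precisely what upgrades $P\subseteq\neg Q$ to the strict $P<1-Q$. If one wishes to be scrupulous about the meaning of ``proposition'' for multi-time histories, one should add that $P$ and $Q$ here are general (possibly inhomogeneous) histories attached to no distinguished time, so ``orthogonal'' and ``contrary'' are read off from the set operations $\cap,\cup,\neg$ on $\Omega$ rather than from single-time projectors, in line with the usage already adopted in the statement of contrary inferences above.
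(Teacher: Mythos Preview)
Your argument is correct and follows essentially the same route as the paper: pass from the zero cover $\{Z_1,Z_2\}$ to the two consistent sets $\{Z_i,\neg Z_i\}$, observe that the complements $\neg Z_1,\neg Z_2$ are disjoint with $\neg Z_i\subseteq Z_j$, and conclude a contrary inference. You are in fact slightly more careful than the paper, since you explicitly rule out the contradictory case $\neg Z_1=Z_2$ (which the paper's proof asserts without checking) by noting it would force $\mu(Z_2)=1$.
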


\begin{proof} Let us assume that there is a two histories zero cover $\{Z_1,Z_2\}$, i.e. $Z_1\cup Z_2=\Omega$ and $\mu(Z_1)=\mu(Z_2)=0$. It follows that we have the following two CSs:

\EQ{C_3&=&\{Z_1,\bar Z_1:=\Omega\setminus Z_1\}\textrm{ where }\mu(Z_1)=0,\mu(\bar Z_1)=1\nonumber\\ C_4&=&\{Z_2,\bar Z_2:=\Omega\setminus Z_2\}\textrm{ where }\mu(Z_2)=0,\mu(\bar Z_2)=1}
We note that $\bar Z_1\cap\bar Z_2=\emptyset$ and also $\bar Z_1\subseteq Z_2$ and $\bar Z_2\subseteq Z_1$, i.e. $\bar Z_1$ and $\bar Z_2$ are contrary propositions. In the consistent set $C_3$ the history $\bar Z_1$ occurs with probability one, while in  consistent set $C_4$ the history $\bar Z_2$ occurs with probability one. We therefore see that every two histories zero cover leads to a contrary inference.
\end{proof}

\section{Examples of Zero Covers} \label{Section examples}

We will give two examples of zero covers. The first example is a three-slit experiment while the second is the ``two-site hopper''. For each example we will see the zero covers and what CSs exist. Only the first example contains a contrary inference (a two histories zero cover) but we give both for exploring the connection with coevents as it will become apparent later.

\subsection{Three slit}

Consider a set up, where we send a photon through a plate with three slits $A,B$ and $C$, and we consider (by post-selecting) a point on the final screen $P$, where the amplitude to go through slit $A$ or $C$ to $P$ is $+1$, while the amplitude to reach this point through slit $B$ is $-1$. We denote the ``fine grained'' history crossing slit $A$ and ending at $P$ as $h_{AP}$, etc. In other words we have $\alpha(h_{AP})=\alpha(h_{CP})=1$ and $\alpha(h_{BP})=-1$. We can see that $\mu(\{h_{AP}\})=\mu(\{h_{BP}\})=\mu(\{h_{CP}\})=1$. Moreover, we have the following coarse grained histories that have zero quantum measure $\mu(\{h_{AP},h_{BP}\})=\mu(\{h_{BP},h_{CP}\})=0$ and generate a zero cover

\EQ{\mathbf{Z}=\{ \{h_{AP},h_{BP}\},\{h_{BP},h_{CP}\}\}.}

For these amplitudes, one can see that there are three different consistent sets.

\begin{enumerate}

\item The set $\co_1=\{\{h_{AP},h_{BP}\},\{h_{CP}\}\}$, where $\mu(\{h_{CP}\})=1$ and $\mu(\{h_{AP},h_{BP}\})=0$.

\item The set $\co_2=\{\{h_{AP}\},\{h_{BP},h_{CP}\}\}$, where $\mu(\{h_{AP}\})=1$ and $\mu(\{h_{BP},h_{CP}\})=0$.

\item The set $\co_3=\{\{h_{AP},h_{BP},h_{CP}\}\}$, where  $\mu(\{h_{AP},h_{BP},h_{CP}\})=1$.

\end{enumerate}
It follows that $\co_1$ and $\co_2$ contain contrary propositions with probability one. In $\co_1$, the history $\{h_{CP}\}$ has probability one, while in $\co_2$ the history $\{h_{AP}\}$ which is contrary to $\{h_{CP}\}$, also has probability one. Intuitively, the one CS gives the particle crossing through slit $C$ with probability one, while the second CS gives the particle crossing slit $A$ with probability one. Moreover in the latter case the set $\{h_{BP},h_{CP}\}$ which incudes $h_{CP}$ has probability zero.

\subsection{Two-site hopper}

Here we consider a system that is known as the two-site hopper \cite{Sorkin 2013}. We have a single system, jumping between two sites $0,1$. The evolution of the system for a single time step is given by

\EQ{U=\frac1{\sqrt2}\left( \begin{array}{cc}
1 & i\\
i & 1\end{array} \right).}
In other words, the amplitude for the system to remain at the same site is $1/\sqrt2$ while to jump to the other site $i/\sqrt2$. Here we will consider the case of three moments of time, i.e. the state starts from site $0$, then we evolve it three times and therefore has $2^3=8$ fine grained histories. This example can be viewed as a discrete time example. However, note that it is equivalent with considering a qubit that one ``observes'' at three moments of time at a fixed basis, in which case one can find a Hamiltonian $H$ and time interval $t$ such that $U=e^{-iHt}$. We will use the following labeling of the fine grained histories (outcomes from right to left):

\begin{eqnarray}
\a(h_1=000)=\frac1{2\sqrt2}&,& \a(h_2=001)=-\frac1{2\sqrt2}\nonumber\\
\a(h_3=010)=-\frac1{2\sqrt2} &,& \a(h_4=011)=-\frac1{2\sqrt2}\nonumber\\
\a(h_5=100)=\frac1{2\sqrt2}i &,&\a(h_6=101)=-\frac1{2\sqrt2}i\nonumber\\
\a(h_7=110)=\frac1{2\sqrt2}i &,&\a(h_8=111)=\frac1{2\sqrt2}i
\end{eqnarray}
Note, that to compute the quantum measure one needs to take into account the $\delta$-function that exists in the decoherence functional at the final time. We can easily see that the following collection of coarse grained histories covers the full history space $\Omega$ and each of them has quantum measure zero

\EQ{\mathbf{Z}=\{\{h_1,h_2\},\{h_1,h_3\},\{h_1,h_4\},\{h_5,h_6\},\{h_6,h_7\},\{h_6,h_8\}\}.}
There are more quantum measure zero histories, generated by disjoint unions of the above mentioned sets.

\bea & &\{h_1,h_2\},\{h_1,h_3\},\{h_1,h_4\},\{h_5,h_6\},\{h_6,h_7\},\{h_6,h_8\},\{h_1,h_2,h_5,h_6\},\nonumber\\
& &\{h_1,h_2,h_6,h_7\},\{h_1,h_2,h_6,h_8\},\{h_1,h_3,h_5,h_6\},\{h_1,h_3,h_6,h_7\},\nonumber\\& &\{h_1,h_3,h_6,h_8\},
\{h_1,h_4,h_5,h_6\},\{h_1,h_4,h_6,h_7\},\{h_1,h_4,h_6,h_8\}
\eea
None of them has more than four (fine grained) histories. As we have mentioned earlier, in order to have contrary inferences one needs a two (coarse grained) histories zero cover. It follows that in this example there are no contrary inferences, even though there is a zero cover.

We can explicitly find the possible consistent sets. In particular, each measure zero coarse grained history generates a two coarse grained histories CS. Furthermore there are many other CSs such as

\EQ{\co_1&=&\{\{h_1,h_2,h_3,h_4\},\{h_5,h_6,h_7,h_8\}\}\textrm{ where }\mu(\{h_1,h_2,h_3,h_4\})=\mu(\{h_5,h_6,h_7,h_8\})=1/2\nonumber\\
\co_2&=&\{\{h_1,h_2,h_7,h_8\},\{h_3,h_4,h_5,h_6\}\}\textrm{ where }\mu(\{h_1,h_2,h_7,h_8\})=\mu(\{h_3,h_4,h_5,h_6\})=1/2\nonumber\\
\co_3&=&\{\{h_1,h_2\},\{h_3,h_4\},\{h_5,h_6\},\{h_7,h_8\}\}\nonumber\\
& &\textrm{ where }\mu(\{h_1,h_2\})=\mu(\{h_5,h_6\})=0,\mu(\{h_3,h_4\})=\mu(\{h_7,h_8\})=1/2}
In total there are 43 different CSs (including CSs that arise as coarse-graining of other CSs).

\section{Selection criterion: Preclusive Consistent Sets}\label{Section PCS}

To make sure that contrary inferences do not occur we should forbid the situation where a coarse-grained history that belongs to an allowed CS and has probability one, is contained in a quantum measure zero history. The above condition appears natural since, as argued in section \ref{Section contextuality}, ``ontological predictions'' and their inferences, should be non-contextual. We strengthen this requirement further, by requesting that no coarse-grained history that belongs to an allowed CS and has non-zero probability, is contained in a quantum measure zero history.

Probabilistic statements in general are epistemic and correspond to the knowledge an agent has about the system. When a given question $Z$ has quantum measure zero, it implies that this particular question is precluded which is a statement about the physical reality of the system. This means that anything contained in $Z$, e.g. $A\subseteq Z$ also does not occur, because there is no ontic state compatible with $A$. It would seem unreasonable for any agent to assign non-zero probability to question $A$ when there is no ontic state that is compatible with $A$. We would therefore need to rule-out contexts (CSs) that permit this. We now define preclusive consistent sets:\\

\noindent\textbf{Preclusive Consistent Set} (PCS): We call \emph{preclusive consistent set}, a consistent set $\co=\{A_1,A_2,\cdots\}$ that for all $i$ such that $\mu(A_i)\neq0$, there does not exist any history $Z\subseteq\Omega$ with quantum measure zero  $\mu(Z)=0$ such that $A_i\subseteq Z$.\\

We are now in position to define the set selection criterion that is the main result of this paper. We can assign a probability to a (coarse-grained) history, if it belongs to PCS. This leads to a contextual view, where the possible contexts are the different PCSs. Moreover, this condition ``selects'' from all the CSs only those that are preclusive. One can easily confirm that among the PCSs there are no contrary inferences. This set selection criterion appears to be the weakest criterion that one can introduce in order to avoid contrary inferences. It does not lead to a unique preferred CS but rules out CSs that are not compatible with the ontological predictions of the theory.

An important point to make here is that we do \emph{not} need to know all CSs in order to confirm whether one particular CS is preclusive. It suffices to check that no coarse grained history including (as subset) histories of the CS we consider have quantum measure zero\footnote{In this paper we have considered finite cardinality histories spaces. The case for infinite histories space is considerably more complicated, both conceptually and technically, since one needs to restrict attention to ``measurable'' sets. It is worth to note however, that in examples of discrete spacetimes, such as (finite) causal sets, the histories space is indeed finite.}. This is still a very complicated task, but simpler from other criteria. We will now examine the criterion named ``Ordered Consistent Sets'' (OCS) that Kent proposed in \cite{Kent 2000} and return in section \ref{Section discussion} discussing the different criteria.


\subsection{Kent's Ordered Consistent Sets}

Kent in \cite{Kent 2000} suggested a different set selection criterion, namely the Order Consistent Set (OCS). We need to give some definitions. A consistent history $H_i$ is a (generally coarse grained) history that belongs to at least one CS, $\exists\quad\mathcal {C}|H_i\in\mathcal{C}$. Here is important to note that any further coarse graining of a CS, is also a CS, so to confirm that a given history $H_i$ is consistent, it suffices to check that it decoheres with its negation, i.e. that the set $\{H_i,\Omega\setminus H_i\}$ is a CS. The set of all consistent histories (CH) is denoted $\mathcal{CH}$.

In the convention that we have adopted in this paper, all coarse grained histories are subsets of the history space $\Omega$. To define the OCS, we first need to define two partial orders on the space of CH's. First is the ``size partial order'' where a consistent history $H_i\in\mathcal{CH}$ is smaller than $H_j\in\mathcal{CH}$ (denoted $H_i \prec H_j$) if and only if $H_i\subseteq H_j$. Second is the ``measure partial order'' where a consistent history $H_i$ is smaller than $H_j$ (denoted $H_i<H_j$) if and only if $\mu(H_i)\leq \mu(H_j)$. While both these partial orders could be defined on all histories, following \cite{Kent 2000} we define them on the set $\mathcal{CH}$ only.

A consistent history $O_i\in\mathcal{CH}$ is \emph{ordered consistent history} if for all $H_j\in\mathcal{CH}$ the two ordering agree, i.e.

\bea
(1)\quad H_j\succ O_i&\Leftrightarrow& \mu(H_j)\geq \mu(O_i)\quad\forall \quad H_j\in\mathcal{CH}\nonumber\\
(2) \quad H_j\prec O_i &\Leftrightarrow& \mu(H_j)\leq\mu(O_i)\quad\forall \quad H_j\in\mathcal{CH}
\eea
An Ordered Consistent Set (OCS) is a consistent set that all the (coarse-grained) histories in the set are ordered consistent. The motivation for this criterion is to choose as physical CSs those that have histories with quantum measure compatible with the quantum measure of all other CHs including CHs belonging to other CSs. This will then lead to inferences ``independent of context''. The reason to restrict attention to the quantum measure of CHs rather than the quantum measure of all histories is because it is only the quantum measure of CHs that can be interpreted as probability (always within some CS).

In order to be able to compare the OCSs with the PCSs, we need to make the following observation. It can be shown that provided the decoherence functional is strongly positive (as we have assumed it is), if a history $A$ has quantum measure zero $\mu(A)=0$ then its negation $\neg A=\Omega\setminus A$ has quantum measure unity $\mu(\neg A)=1$ (see for example \cite{SuWa 2010}) and therefore they form a CS $\mathcal{C}_A=\{A,\neg A\}$. Therefore all precluded histories are CH in the terminology of \cite{Kent 2000}. It follows that the set of all precluded histories $\mathcal{Z}=\{Z_i\subset\Omega|\mu(Z_i)=0\}$ is a proper subset of the set of all consistent histories $\mathcal{CH}$.

It is now easy to see that an OCS is always PCS. An ordered consistent history $H_i$ that has non-zero quantum measure $\mu(H_i)=p\neq 0$ is not contained in any consistent history with quantum measure less than $p$. Since all the precluded histories are consistent histories, it follows that $H_i$ is not subset of any precluded history. Therefore an OCS is a CS that all the non-zero measure histories are not subsets of precluded histories and therefore all OCSs are also PCSs.

It is worth pointing out that the converse is not in general true, since it is possible for a history $A$ in a PCS to be subset of a consistent history $B$ that $\mu(B)<\mu(A)$. To illustrate this see an explicit example at the Appendix \ref{example: OCS-PCS}

In section \ref{Section discussion} we will expand further on the differences in motivation that the two principles have and on comparing them further by stressing their advantages and disadvantages.

\section{Coevents and Preclusive Consistent Sets}\label{Section coevents PCS}

Having a complete and conceptually satisfactory interpretation for a histories version of quantum theory is important. The consistent histories provide a framework to obtain probabilities for histories of a closed quantum system. We have addressed so far the potential inconsistencies within this framework. What it is not typically done in the consistent histories, is a discussion of what is the ontology behind the probabilistic statements. In other words while one can (attempt) to understand the probabilities given by the consistent histories as a state of knowledge of an agent,  we have not addressed the issue of what ontic states is the knowledge of this agent about.

The coevents formulation of quantum theory, introduced by Rafael Sorkin \cite{coevent1,coevent2} attempts to address the issue of which is the ontology of a histories formulation. While this formulation succeeds in circumventing the problems presented by the Kochen Specker theorem \cite{DoGh 2008,SuWa 2010} and manages to address the problem of ontology in a satisfactory manner, probabilistic predictions are difficult to recover. One is forced to go back at the founders of probability theory and use the (weak) Cournot principle in order to recover probabilistic predictions \cite{GhWa 2009}. An interesting possibility is that one may be able to use the machinery of consistent histories, which provides a full probabilistic calculus, for dealing with probabilistic predictions in the coevents formulation. This will also benefit the consistent histories approach, since it will obtain the underlying ontic states that the probabilities refer to.
Here we give some brief introduction to the relevant issues of this formulation, while further details can be found in this recent review \cite{coevents review}.

Ideally one would wish to have as possible ontic states of the theory the fine-grained histories. However, due to the existence of zero-covers every fine-grained history can be included  in a precluded (quantum measure zero) history. The alternative ontology suggested is that of a coevent. A coevent is a (coarse-grained) history $R$ that obeys two basic properties. (i) $R$ is preclusive, i.e. there does not exist $Z\in\Omega$ where $\mu(Z)=0$ and $R\subseteq Z$. (ii) It is minimal, i.e. given a coevent $R$ there does \emph{not} exist any $P\in\Omega$ where $P$ is preclusive and $P\subset R$. A coevent is the smallest possible preclusive history. The set of all coevents is the set of all preclusive, minimal histories and these are the possible ontic states of this formulation. It is worth noting, that the above definitions have an obvious classical limit. If instead of quantum measure we had a classical probability measure, then the above definitions give as ontic states fine grained histories which are the correct ontic states in classical physics. In other words the ontology of classical physics can be seen as the special case of this generalised ontology, and the difference arises only due to the different nature of the quantum measure.

Any given coevent $A$ gives in a truth value to all other coarse grained histories. In particular if the coevent $A$ is realised, history $B$ gets the value ``True'' if and only if $A\subseteq B$. In all other cases it gets truth value ``False''. We can easily see that contradictions exist if one considers a question $C$ such that both $C\cap A\neq\emptyset$ and $\neg C\cap A\neq\emptyset$. Here we should also stress, that given a quantum measure, one can find uniquely the set of all allowed coevents/ontic states, irrespective of whether they belong to a CS.

It would a failure of a theory if it allowed some ``agents'' to make a prediction that some property occurs with non-zero probability, but there does not exist a single ontic state that is compatible with this property. This was the case for standard consistent histories but not for PCSs. From the very definition of a coevent we can see the following. Any history $A$ that belongs to a PCS and has non-zero quantum measure is a preclusive history. It follows that either $A$ is a coevent, or $A$ contains a coevent. This means that any PCS is compatible with the coevents ontology. For any epistemic statement that assigns a non-zero probability to a history $A$, there is at least one ontic state (coevent $B$) that is compatible with this history, meaning that is fully contained within a single (coarse grained) history\footnote{This compatibility means that if the coevent $B$ is realised, then the statement ``history $A$ occurred'' gets truth value ``True'', while the negation $\neg A$ gets truth value ``False''.}.

\section{Examples revised}\label{Section examples 2}
We now reexamine the examples in view of the selection criterion brought forward and the potential coevents ontology.
\subsection{Three slit}
It easy to confirm that from the three CSs that existed, it is only the third CS that is PCS. In $\co_1$ it is the history $\{h_{CP}\}$ that has non-zero quantum measure and is subset of the quantum measure zero history $\{h_{BP},h_{CP}\}$. In $\co_2$ it is the history $\{h_{AP}\}$ that has non-zero quantum measure and is subset of the quantum measure zero history $\{h_{AP},h_{BP}\}$. This is a very special example, where the requirement of PCS leads to unique CS which in this case is the trivial CS, namely $\co_3$.

There is also a unique allowed coevent in this example (also a very specific feature of this example). This coevent is the history $\{h_{AP},h_{CP}\}$, since it is a preclusive history and is not contained at any other preclusive history. As one can see, it is fully contained at a single history of the only allowed CS  (and therefore leads to no contradiction).

\subsection{Two-site hopper}

As we have already mentioned, there are no contrary inferences in this example. Furthermore, one can show that \emph{all} CS of this example are PCS. To see this we should note that to rule out any CS it needs to contain a non-zero history that is subset of a quantum measure zero history. The largest (in terms of cardinality) quantum measure zero history of this example contains four fine-grained histories. One can see that the smallest non-zero history included at any CS contains four fine-grained histories and therefore cannot be subset of \emph{any} quantum measure zero set. We see that in this example not only we are not lead to a unique CS, but the proposed criterion does not rule out any CS.

This example, also allows for six possible coevents. These are

\beq
\{\{h_2,h_3\},\{h_2,h_4\},\{h_3,h_4\},\{h_5,h_7\},\{h_5,h_8\},\{h_7,h_8\}\}
\eeq
One can verify that in all CSs, histories that have non-zero quantum measure contain at least one of these six coevents.

\section{Discussion}\label{Section discussion}

We have identified as one of the major problems of the consistent histories approach the fact that the possible contexts (CSs) allowed by the formalism lead to contextual ontological inferences. An example of this problem is the existence of contrary inferences. As it has been argued earlier, when dealing with probability one and probability zero statements (ontological predictions) one should obtain non-contextual truth values and non-contextual inferences. We aimed at restricting the possible CSs, by introducing a set selection criterion. We requested that only \emph{preclusive consistent sets} should be considered. Within the PCSs no contrary inferences exist and moreover the ontological predictions and inferences become non-contextual.

Starting with different motivation, Adrian Kent \cite{Kent 2000} suggested a different criterion, namely that an allowed CS should consist of \emph{ordered} histories. The motivation was that the quantum measure of any history belonging to a physically allowed CS should be compatible with the quantum measure of all CHs. We on the other hand, in our motivation for the PCSs were concerned with the compatibility of logical inferences and in particular those that arise from ontological predictions. Both criteria resolve the issue of contrary inferences by making sure that CSs that contain histories that may lead to contrary inferences are not physically allowed CSs.

The OCSs was shown to be strictly more restrictive from PCSs. If one is looking for a criterion that limits the physically allowed CSs as much as possible, approaching the ``ideal'' case where a single finest grained CS exists, then Ordered Consistency is a better criterion than Preclusive Consistency. However, there are two major reasons why this may not be case. The first reason is that while we want a restrictive criterion, we do not want a too strong criterion. In other words we do not want the criterion to allow only for trivial CSs, or to exclude a CS that intuitively corresponds to a semiclassical situation that concerns macroscopic observables. It is therefore desirable to find the weakest criterion that rules out contrary inferences. This could be later complemented (if one wished) with some stricter criterion (possibly leading to a unique CS) provided that the latter comes with a proof that it allows for histories that include the semiclassical evolution of macroscopic objects.

The second and possibly more important advantage of PCSs has to do with the simpler definition it has.  In particular there is a conceptual and a technical advantage of having a simpler definition. In Preclusive Consistency there are two types of histories, those that belong to a PCSs and those that do not. Moreover to decide if one CS is PCS, one needs to check that each of the histories belonging to the CS are not subset of one of the Precluded Histories (i.e. checking that all the supersets are not preclusive). To compute if a history is preclusive, one needs only the quantum measure of this history. In Ordered Consistency there are three types of histories. There are the histories that do not belong to any CS, there are the histories that do not belong to any OCS but belong to a CS and are used to define the OCS  and finally there are the histories that belong to an OCS that are used for probabilistic predictions. The status of histories that are used for defining the OCS but are not themselves members of an OCS is strange. If one does not consider this CS as physical, why to give the histories belonging to a CS that is not physical a different status (and importance) than histories that do not belong to any CS at all? This also leads to a practical problem as well. In order to decide if a CS is OCS one needs to first compute all the CHs (a task clearly more difficult than computing the precluded histories) and check that the quantum measure of the histories in the CS in question are compatible with all CHs by checking all the supersets and all the subsets of each history. Therefore given a CS it is much more difficult to decide if it was an OCS than if it was a PCS\footnote{Note that already deciding if a CS is PCS is a very demanding task even for history space $\Omega$ of small cardinality.}.

Here we should step back, and by giving a precise picture of the difficulties that consistent histories face in order to form a fully satisfactory interpretation of closed system quantum theory, we can evaluate what has been achieved by Preclusive Consistency and what still remains unanswered.

The motivation to use a histories formulation comes from the desire to address multiple time propositions and also to use a formulation of quantum theory that space and time appear in more equal footing which would suit a quantum theory of gravity. However, due to problems as the one presented in \cite{Kent 1997}, consistent histories formulation seems to suffer from problems more severe than what standard QT faces. A first step to construct a fully satisfactory theory of closed quantum systems, is to lift those inconsistencies. A second step would be to provide the correct ontology and possibly remove further inconsistencies that exist in standard QT. Here we mainly aim to address the first step.

In particular in consistent histories there exist multiple CSs/contexts. This is not different (better or worse) from standard QT. For example the position and momentum measurements in single-time QT are different ``contexts'' that cannot be compared. The probabilities predicted by QT are independent of which context (basis) each proposition belongs and this is the same in consistent histories. The crucial point where consistent histories are doing worse than standard QT, is that in standard QT if a proposition $A$ belongs to more than one context and has probability one, then all other propositions that are coarse grainings of proposition $A$, also happen with probability one in any context that they belong. This is not the case in consistent histories, because of the existence of contrary inferences. By the principle of Preclusive Consistency one evades this problem.

A further aim would be to find a criterion that evades all possible incompatibilities, including those involving probabilities between zero and one. Moreover, ideally we would like to end up with a unique CS that also recovers classical evolution for macroscopic objects. However this would be part of the second step described earlier, which is resolving problems that exist in standard QT (such as the existence of incomparable contexts and the unambiguous emergence of classical physics).

Part of the second step is also to find the suitable ontic states. In this paper we mention the coevent formulation, which gives one candidate for ontic states. This was done mainly to illustrate that removing the strong inconsistencies such as the existence of contrary inferences allows for ontological descriptions that are compatible with PCSs.

Here we should stress that although the criterion of preclusivity may be very desirable for any candidate consistent set to satisfy, there is still a tremendous amount of remaining ambiguity in identifying a unique (or essentially unique) set.  Although they will not contain contrary inferences, the various PCSs will still be largely incomparable to each other, and most will have no interpretation in terms of well known macroscopic classical variables.

A final issue to analyse, is the connection of PCSs with the coevents formulation. We found in section \ref{Section coevents PCS} that PCSs are compatible with coevents. It appears that a combination of coevents with consistent histories could be positive for both formulations. On the one hand consistent histories would obtain a possible underlying ontology for which the formalism provides predictions. On the other hand the coevents formulation will obtain an alternative way dealing with probabilistic predictions and will be able to benefit from the developed formalism of the consistent histories. Furthermore one could attempt to use more properties of the coevents formulation to restrict the possible CSs further, always in a way compatible with the coevent ontology. In the coevent formulation, one can define a unique finest-grained classical domain, the ``principal classical partition'' (see appendix of \cite{GhWa 2009}). Requiring that allowed CSs should arise as a coarse graining of the principal classical partition would give another criterion but we leave the analysis of such a criterion for future work\footnote{We should note here, that similarly with the OCSs it would be much more difficult to decide if a given CS satisfies this new criterion.}.

\noindent \emph{Acknowledgments}: This work was supported by the UK Engineering and Physical Sciences Research Council (EPSRC) through grant EP/K022717/1. Partial support from COST Action MP1006 is also gratefully acknowledged. The author thanks the anonymous reviewers for their constructive comments.

\appendix

\section{Weak Vs Strong Positivity of Decoherence Functional}\label{Appendix strong positivity}

In section \ref{Section consistent histories} we gave the definition of the decoherence functional as either path integral or string of projection operators. Moreover we gave four conditions that the defined decoherence functional obeys. The third condition was the positivity condition and we noted that there exist two versions of this condition. The stronger one was used in this paper while the weaker one was given in the first papers \cite{CH1,CH2,CH3,CH4}. One can show that for decoherence functionals defined by equations (\ref{path df}) and (\ref{operator df}), i.e. for standard quantum mechanics, the strong positivity condition is indeed obeyed.

It is important to show that the strong positivity condition is satisfied by all decoherence functionals allowed, because this is crucial in determining the relation between PCSs and OCSs.

When the initial state is pure, the decoherence functional for two histories $\alpha$ and $\beta$ can be expressed as the inner product between the branch state vectors $\psi_\alpha$ and $\psi_\beta$.  But this means the decoherence functional over any set of histories $\{\alpha\}$ is a Gram matrix, which is automatically a positive matrix.  When the initial state is impure, it is just a mixture of pure state so that the decoherence functional is a mixture of (positive) Gram matrices, which must also be positive.

In consistent histories framework as considered by Omnes, Griffiths, and Gell-Mann and Hartle, branch state vectors are obtained by applying the class operators -- i.e. strings of time-evolved projectors -- to the initial state.  Importantly, this does \emph{not} rely on having a set $\Omega$ of fine-grained histories from which all other histories are a coarse-graining.

However one could, instead of starting with definitions of Eq. (\ref{path df}) and Eq. (\ref{operator df}), take a more radical view and use conditions (1-4) of section \ref{Section consistent histories} as defining the decoherence functional. This view can be adapted in order to \emph{generalise} quantum mechanics in such a way that is desirable for cases that time (and thus time ordered projections) may not be well defined, as for example in certain approaches to quantum gravity. Another reason that this type of generalisations are interesting is because they do not presuppose the Hilbert space structure.

In attempting to generalise quantum mechanics one is free to choose to use either the strong or the weak positivity condition. In \cite{DJS 2010} it was shown that if one adopts the strong positivity condition, then some Hilbert space structure can be recovered starting solely from conditions (1-4). Interestingly, in \cite{DHW 2014} it was shown that even if one is restricted to strongly positive decoherence functionals, there are some correlations allowed that are not predicted by standard quantum theory. Therefore, even the stronger positivity condition constitutes a generalisation of standard quantum mechanics. Moreover in \cite{DHW 2014} it was also pointed out that the weak positivity condition is not closed under composition. In other words it is possible to have two uncorrelated, non-interacting systems described by weak positive decoherence functionals and if one attempted to construct a joint decoherence functional for those systems, it no longer obeys the positivity condition. On the other hand, strongly positive decoherence functionals are closed under composition (two systems described by strongly positive decoherence functionals lead to a joint decoherence functional that is also strongly positive).

For the above reasons, if one was to generalise quantum mechanics using conditions (1-4) as starting point, then there are good physical reasons to believe that it is more appropriate to adopt the \emph{strong} positivity condition.

\section{Example of PCS that is not OCS}\label{example: OCS-PCS}

Here we will give an explicit example that a CS is Preclusive but not Ordered. Let us consider an example with three possible histories $\Omega=\{h_1,h_2,h_3\}$. Consider the following decoherence functional

\EQ{D=\left( \begin{array}{ccc}
1/3 & -7/24 & 7/24\\
-7/24 & 1/2 & -7/24\\
7/24 & -7/24 & 3/4\end{array} \right).}
One can easily check that it obeys the requirements of a strongly positive decoherence functional (is a positive matrix, symmetric, normalised to unity with all diagonal terms non negative). There are two CS, namely $C_1=\{\{h_1\},\{h_2,h_3\}\}$ which leads to the coarse grained decoherence functional

\EQ{D=\left( \begin{array}{cc}
1/3 & 0\\
0 & 2/3\end{array} \right).}
and $C_2=\{\{h_1,h_2\},\{h_3\}\}$ which leads to the coarse grained decoherence functional

\EQ{D=\left( \begin{array}{cc}
1/4 & 0\\
0 & 3/4\end{array} \right).}
Since there is no precluded history, both CSs are PCSs. However, $C_1$ is not an OCS since history $\{h_1\}$ has quantum measure $\mu(\{h_1\})=1/3$ which is greater than the quantum measure of $\{h_1,h_2\}$ which is $\mu(\{h_1,h_2\})=1/4$. Both $\{h_1\}$ and $\{h_1,h_2\}$ are consistent histories and (evidently) $\{h_1\}\subset\{h_1,h_2\}$. Similarly the consistent set $C_2$ is also not an OCS. This means that the only OCS is the trivial CS, while both $C_1$ and $C_2$ are PCSs.

 \end{document}